\newtheorem{theorem}{Theorem}
\theoremstyle{remark}
\newtheorem{remark}{Remark}
\newcommand{\norm}[1]{\left\lVert#1\right\rVert}
\DeclareMathAlphabet{\mathbfsl}{OT1}{ppl}{b}{it}
\def\name{AVCC~}
\def\BibTeX{{\rm B\kern-.05em{\sc i\kern-.025em b}\kern-.08em
    T\kern-.1667em\lower.7ex\hbox{E}\kern-.125emX}}
\newcommand{\linebreakand}{%
  \end{@IEEEauthorhalign}
  \hfill\mbox{}\par
  \mbox{}\hfill\begin{@IEEEauthorhalign}
}
\begin{document}

\title{Adaptive Verifiable Coded Computing: Towards Fast, Secure and Private Distributed Machine Learning}
\author{\IEEEauthorblockN{Tingting Tang}
\IEEEauthorblockA{\textit{Computer Science Department} \\
\textit{University of Southern California}\\
Los Angeles, USA \\
tangting@usc.edu}
\and
\IEEEauthorblockN{Ramy E. Ali}
\IEEEauthorblockA{\textit{Electrical Engineering Department} \\
\textit{University of Southern California}\\
Los Angeles, USA \\
reali@usc.edu}
\and
\IEEEauthorblockN{Hanieh Hashemi}
\IEEEauthorblockA{\textit{Electrical Engineering Department} \\
\textit{University of Southern California}\\
Los Angeles, USA \\
hashemis@usc.edu}
\linebreakand
\IEEEauthorblockN{Tynan Gangwani}
\IEEEauthorblockA{\textit{Electrical Engineering Department} \\
\textit{University of Southern California}\\
Los Angeles, USA \\
tgangwan@usc.edu}
\and
\IEEEauthorblockN{Salman Avestimehr}
\IEEEauthorblockA{\textit{Electrical Engineering Department} \\
\textit{University of Southern California}\\
Los Angeles, USA \\
avestime@usc.edu}
\and
\IEEEauthorblockN{Murali Annavaram}
\IEEEauthorblockA{\textit{Electrical Engineering Department} \\
\textit{University of Southern California}\\
Los Angeles, USA \\
annavara@usc.edu}
}

\maketitle
\thispagestyle{plain}
\pagestyle{plain}
\begin{abstract}
Stragglers, Byzantine workers, and data privacy are the main bottlenecks in distributed cloud computing. Some prior works proposed coded computing strategies to jointly address all three challenges.  They require either a large number of workers, a significant communication cost or a significant computational complexity to tolerate Byzantine workers. Much of the overhead in prior schemes comes from the fact that they tightly couple coding for all three problems into a single framework.  In this paper, we propose \textbf{A}daptive \textbf{V}erifiable \textbf{C}oded \textbf{C}omputing (\textbf{AVCC}) framework that decouples the Byzantine node detection challenge from the straggler tolerance. AVCC leverages coded computing just for handling stragglers and privacy, and then uses an orthogonal approach that leverages verifiable computing to mitigate Byzantine workers. Furthermore, AVCC dynamically adapts its coding scheme to trade-off straggler tolerance with Byzantine protection.  We evaluate AVCC on a compute-intensive distributed logistic regression application. Our experiments show that AVCC achieves up to $4.2\times$ speedup and up to $5.1\%$ accuracy improvement over the state-of-the-art Lagrange coded computing approach (LCC). AVCC also speeds up the conventional uncoded implementation of distributed logistic regression by up to $7.6\times$, and improves the test accuracy by up to $12.1\%$. 
\end{abstract}

\begin{IEEEkeywords}
coded computing, verifiable computing, machine learning, straggler mitigation, Byzantine robustness, privacy
\end{IEEEkeywords}

\section{Introduction}
Distributed machine learning using cloud resources is widely used as it allows users to offload their compute-intensive  operations to run on multiple cloud servers~\cite{xu2012cloud}. Distributed computing, however, faces several challenges such as stragglers and compromised systems. Execution speed variations are commonly observed among compute nodes in the cloud, which can be up to an order of magnitude resulting in straggler behavior. These variations are due to the heterogeneity in server hardware, resource contention across shared virtual instances, IO delays, or even hardware faults~\cite{ananthanarayanan2010reining}. Stragglers hamper the end-to-end system performance~\cite{dean2013tail}. The second challenge is that hackers routinely compromise some machines in the cloud. These compromised nodes cause two problems. The first problem is that users' data privacy may be compromised when some hacked cloud instances collude to extract private information. In the other words, hacked workers may collude to glean information about the data that a user does not want to disclose. Second, hacked nodes may act as Byzantine nodes and return incorrect computational results to the client that may derail the training performance~\cite{blanchard2017machine}.  The goal of this work is to provide a unified efficient framework that jointly tackles stragglers, provides data privacy and eliminates Byzantine nodes. 

Most prior works rely on replication to provide straggler resiliency ~\cite{Apache-Hadoop,suresh2015c3,shah2015redundant,ananthanarayanan2013effective,dean2013tail,gardner2015reducing}. Replication, however, entails significant overhead. Since it is unknown which node may be a straggler a priori, replication strategies may pro-actively assign the same task to multiple nodes. Alternatively, reactive strategies may wait for a straggler delay to appear and then relaunch the straggling task on another node, which delays the overall execution. Coded computing based approaches are known to be more efficient when stragglers are not known a priori~\cite{lee2017speeding, yu2019lagrange}. In such approaches, a primary server encodes the data and distributes the encoded data over the workers. The workers then perform the computations over the encoded data and the desired computation can be recovered from the fastest subset of workers. For instance, the coding-theoretic approach of \cite{lee2017speeding} uses a maximum distance separable (MDS) $(N, K)$-code for encoding the data. With $(N, K)$ MDS coding, the data is split into $K$ pieces and then encoded into $N$ pieces and distributed to $N$ workers to perform linear operations, such as matrix-vector multiplication. If a subset of $K$ nodes ($K \leq N$) returns the result to the primary server, it can decode the full result.  

More advanced encoding strategies mask the data with random noise with the joint aim of mitigating stragglers, ensuring data privacy as well as tackling Byzantine nodes. Specifically, Lagrange coded computing (LCC)~\cite{yu2019lagrange} provides straggler resiliency, Byzantine robustness and privacy protection even if a subset of workers, up to a certain size, collude. LCC guarantees that the colluding workers cannot learn any information about that data in the information-theoretic sense. However, the cost of tolerating Byzantine workers with LCC is twice as the cost of tolerating  stragglers. For instance, 
tolerating two Byzantine workers requires an additional four workers while tolerating two stragglers only requires two additional workers. As we describe in more detail later, recent works reduced the cost of tolerating Byzantine workers to be the same as the cost of tolerating stragglers at the expense of increasing the communication cost significantly \cite{yang2021coded} or a significant computation complexity  \cite{subramaniam2019collaborative,soleymani2021list}. 

Inspired by the prior coding based approaches, and motivated by the large overheads faced by these approaches, we propose the \emph{Adaptive Verifiable Coded Computing} (AVCC) framework that jointly addresses stragglers, Byzantine workers and data privacy. 
Unlike LCC, the cost of tolerating Byzantine workers in AVCC is the same as the cost of tolerating stragglers. \name achieves this improvement through a unique decoupling of the data encoding for tackling stragglers and privacy, and an orthogonal information-theoretic verifiable computing approach  that uses Freivalds' algorithm \cite{Freivalds1977ProbabilisticMC} to detect Byzantine workers.  This decoupling enables \name to tolerate stragglers and tackle untrusted nodes in any distributed polynomial computations. \name  further adapts to the dynamics of the system  by changing the coding strategy at runtime depending on the straggler or Byzantine prevalence. 

The basic intuition behind \name can be provided with the following example. Consider the case when \name uses $(N, K)$-MDS coding to tolerate stragglers. MDS coding can be considered as a simplified version of LCC that can be used for linear computations. The primary server encodes the data and sends it to $N$ workers. It then receives and decodes the fastest $K$ out of the $N$ worker results to compute the final result. However, \name's verification process checks the integrity of the computation provided by each of the $K$ workers (Byzantine Workers). If any one of the $K$ workers fails the verification process, AVCC tags such a worker as a Byzantine node and discards the results provided from that node. It then has to wait for additional workers whose results can be verified before decoding the full computational output. Thus, \name trades-off straggler tolerance for Byzantine detection
and correctly computes the result.
Our experiments show that \name speeds up the state-of-the-art LCC implementation of distributed logistic regression by up to $4.2\times$, and improves the test accuracy by up to $5.1\%$ accuracy. AVCC also achieves up to $7.6\times$ speedup and up to $12.1\%$ accuracy improvement over the conventional uncoded approach of distributed logistic regression.


\textbf{Organization.} The rest of this paper is organized as follows. Section~\ref{sec:background} provides a background about coded computing, discusses the closely-related works and our contributions. In Section~\ref{sec:model}, we describe our system, the threat model, and our guarantees. Section~\ref{sec:implementation} introduces our adaptive verifiable coded computing framework. In Section~\ref{sec:experimentalsetup}, we describe our experimental setup followed by extensive experiments to evaluate our method in Section~\ref{sec:experiments}. Finally, concluding remarks and future directions are discussed in Section~\ref{sec:conclusion}.

\section{Background and Related Works}
\label{sec:background}
In this section, we provide a brief background about coded computing, verifiable computing and  the closely-related works. 
\subsection{Coded Computing}
\label{Subsection:coded}
In the past, coding was used mainly to tolerate data losses during communication and data storage. However, coded computing has extended this concept to enable tolerating stragglers while performing computations. In a system with dataset $\mathbf X \in \mathbb F_q^{m \times d}$, where $m$ is the number of samples and $d$ is the feature size, the goal of distributed computing is to compute a multivariate polynomial $f: \mathbb V \to \mathbb U$ over $\mathbf X=(\mathbf X_1^{\top }, \mathbf X_2^{\top}, \cdots, \mathbf X_K^{\top})^{\top}$, where $\mathbf X_i \in \mathbb F_q^{m/K \times d}$ and $\mathbb V$ and $\mathbb U$ are vector spaces of dimensions $v$ and $u$, respectively, over a finite field $\mathbb F_q$. That is, the goal is to compute $f(\mathbf X_i), \forall i \in [K]$ using a distributed collection of compute nodes. We describe two approaches for encoding data: MDS which is used for linear computations, and LCC which can handle any polynomial computations. 

\textbf{MDS Coding.} MDS coded computing is a computing paradigm that enables distributed computing on encoded data to tolerate stragglers. Fig. \ref{fig:MDS} illustrates the idea of computing a matrix-vector multiplication $\mathbf X \mathbfsl b$ using $3$ workers with a $(3,2)$ MDS code. The data matrix $\mathbf X$ is evenly divided into $2$ sub-matrices $\mathbf X_1$ and $\mathbf X_2$, then encoded into $3$ coded matrices $\tilde{\mathbf X}_1=\mathbf X_1, \tilde{\mathbf X}_2=\mathbf X_2$ and $\tilde{\mathbf X}_3=\mathbf X_1 + \mathbf X_2$, and assigned to worker $1, 2$, and $3$, respectively. Worker $i$ then receives the coded matrix $\tilde{\mathbf X}_i$ and the vector $\mathbfsl b$ and starts computing $\tilde{\mathbf X}_i \mathbfsl b$, where $i \in [3]$.  The final result can be recovered when the results from any $2$ out of the $3$ workers are received, without the need to wait for one straggler. Assume that results from worker $2$ and worker $3$ are received. Then $\mathbf X_1\mathbfsl b$ can be decoded by subtracting $\mathbf X_2 \mathbfsl b$ from $(\mathbf X_1+\mathbf X_2) \mathbfsl b$, and the final result can be obtained by concatenating $\mathbf X_1\mathbfsl b$ and $\mathbf X_2\mathbfsl b$.
\begin{figure}[t]
    \centering
    \includegraphics[width=0.4\textwidth]{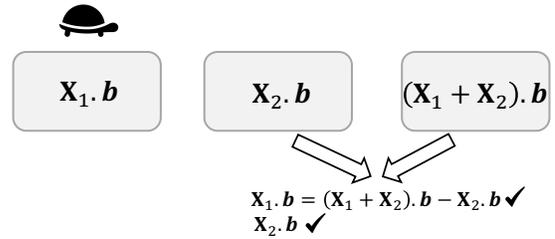}
    \caption{An illustration of a distributed computing system using $(3,2)$ MDS code is depicted. The goal is to compute the matrix-vector multiplication $\mathbf X \mathbfsl b$, where $\mathbf X=[\mathbf X_1^\top, \mathbf X_2^\top]^\top$ while tolerating one straggler. In this example, the first worker is a straggler and only the results from worker $2$ and worker $3$ are available.}
    \label{fig:MDS}
\end{figure}
\noindent In general, for an $(N,K)$ MDS code, the data matrix $\mathbf X$ is divided into $K$ equal size sub-matrices $\mathbf X_1,\dots,\mathbf X_K$, for $K\leq N$. Then $N$ encoded matrices $ \tilde{\mathbf X}_1, \dots, \tilde{\mathbf X}_N$ are generated by applying $(N, K)$-MDS code to the sub-matrices. If a systematic MDS code is used for encoding, then $ \tilde{\mathbf X}_i=\mathbf X_i$, for $1\leq i \leq K$. Once any $K$ out of the $N$ results are received from the workers nodes, the master node can decode the final result using these $K$ results. 

\textbf{Lagrange Coded Computing (LCC) \cite{yu2019lagrange}.}  MDS-coded computing can inject redundancy to tolerate stragglers in linear computations.  LCC extends this idea for any polynomial-based computation. 
LCC provides a single framework to tolerate stragglers, Byzantine nodes and to protect privacy against colluding workers.  LCC encodes the dataset into $N$ coded datasets $\tilde{\mathbf X}_1, \tilde{\mathbf X}_2, \cdots, \tilde{\mathbf X}_N \in \mathbb V$, where $N$ is the number of worker nodes, and the $i$-th node computes $f(\tilde{\mathbf X}_i) \in \mathbb U$. Specifically, LCC requires that 
\begin{align}
\label{eqn:N_LCC}
N \geq (K+T-1)\deg f+S+2M+1,
\end{align}
to tolerate $S$ stragglers, $M$ Byzantine workers and to ensure privacy of the dataset against any $T$ colluding workers, where $\deg f$ is the degree of the polynomial $f$. We refer
to this scheme as $(N, K, S, M)$ LCC. The overall computation results in LCC can be then recovered when at least $N-S$ nodes return their computations through a Reed-Solomon decoding approach \cite{berlekamp1966non}. This approach has an encoding complexity of $O(N \log^2(K) \log\log(K) v)$ and results in a decoding complexity of $O((N-S) \log^2 (N-S) \log \log (N-S) u)$, where $v$ is the output size of the encoder and $u$ is the input size of the decoder. That is, the encoding complexity is almost linear in $O(Nv)$ and the decoding complexity is almost linear in $O((N-S)u)$. We observe from  (\ref{eqn:N_LCC}) that handling Byzantine workers are \emph{twice} as costly as stragglers in LCC. Hence, Byzantine node detection is resource-intensive in LCC.  

\textbf{Broader use of coded computing.}  LCC~\cite{yu2019lagrange} provides coded redundancy for any arbitrary multivariate polynomial computations such as general tensor algebraic functions, inner product functions, function computing outer products, and tensor computations. Polynomially coded computing~\cite{yu2017polynomial} can tolerate stragglers in bilinear computations such as Hessian matrix computation. Recent works \cite{so2020scalable, narra2020collage,kosaian2019parity,jahani2020berrut, soleymani2021approxifer} have also demonstrated promising results for extending coded computing to beyond polynomial computations such as deep learning training and inference. 

\subsection{Verifiable Computing}
\label{Subsection:verifiable}
 Verifiable computing is an orthogonal paradigm that has been designed to ensure computational integrity \cite{Freivalds1977ProbabilisticMC}. The basic principle behind verifiable computing is to allow a user to verify whether a compute node has performed the assigned computation correctly. While there are a variety of approaches to verify computations, in this work we adapt the approach proposed in  \cite{Freivalds1977ProbabilisticMC}.  Consider the problem where a user is interested in computing the matrix-vector multiplication $\mathbfsl y=\mathbf X\mathbfsl b$ by offloading it to a worker node, where $\mathbf X \in \mathbb{F}^{m \times d}_q$ and $\mathbfsl b \in \mathbb{F}^{d}_q$. The user chooses a  vector $\mathbfsl r \in \mathbb{F}^{m}_q$ uniformly at random and computes $\mathbfsl s \triangleq \mathbfsl r \mathbf X$ as a private verification key. This verification key generation is done only once. The worker node then returns the outcome $\hat{\mathbfsl y}$, where $\hat{\mathbfsl y} = \mathbf X \mathbfsl b$ if the node performed the computation correctly. The user then performs the following verification check: $\mathbfsl r\cdot \hat{\mathbfsl y} = \mathbfsl s\cdot{\mathbfsl b}$. If the worker node passes this verification check successfully, then the user accepts this computation result. Note that this verification step is done in only $O(m+d)$ arithmetic operations and is much faster compared to computing $\mathbfsl y = \mathbf X {\mathbfsl b}$ on the server, which incurs a complexity of $O(md)$. That is, through this step, the user performs substantially fewer operations compared to the original computation to identify any discrepancy in the computations with high probability.
 
%

\subsection{Related Coding Strategies  for Byzantine Detection}
\label{Subsection: Related Work}

Numerous works  considered the problem of tolerating Byzantine workers in distributed computing and learning settings. For instance, Draco~\cite{chen2018draco} and Detox~\cite{rajput2019detox} introduce algorithmic redundancy that tackles the Byzantine problem in isolation without dealing with stragglers.  Coding-theoretic approaches beyond LCC and Polynomial codes have been proposed recently to tolerate Byzantine nodes in distributed settings \cite{subramaniam2019collaborative,yang2021coded,soleymani2021list}. In \cite{yang2021coded},  two schemes have been proposed to improve the adversarial toleration threshold of LCC based on decomposing the polynomials as a series of monomials.  This decreases the effective degree of the polynomial, but increases the communication cost significantly.   In \cite{subramaniam2019collaborative}, this problem was also considered for the Gaussian and the uniform \emph{random} error models. Our work, however, considers the worst-case error model. 
More recently, a list-decoding approach with side information has been developed in \cite{soleymani2021list} which uses the folding technique in algebraic coding to tolerate more Byzantine nodes. This approach, however, has a significant decoding complexity that is quadratic in the number of workers while it is almost linear in LCC.

Another line of work focused on detecting Byzantine behavior through verifiable computing. An information-theoretic verifiable computing scheme for polynomial computations was proposed in \cite{sahraei2019interpol,sahraei2020infocommit} in the single user-server setting based on Freivalds' algorithm. Leveraging verifiable computing in machine learning has also been considered  in many works such as  \cite{ghodsi2017safetynets,Tramr2019SlalomFV,ali2020polynomial}. In particular, Slalom~\cite{Tramr2019SlalomFV} uses Trusted Execution Environment (TEE)-GPU collaboration for privacy-preserving and verifiable inference. Because of the hardware limitations of TEE and consequently its lower performance, Slalom offloads the linear operations to an untrusted GPU and uses Freivalds'  algorithm for verification~\cite{Freivalds1977ProbabilisticMC} which is less compute-intensive than the original computation. However, this scheme is designed for a single GPU system and cannot scale to the distributed system. Also, it does not support training and it does not mitigate stragglers. 

\subsection{Contributions}
\label{Subsection: Contributions}
The question we pose in this work is whether there is a way to exploit verifiable computing in conjunction with coded computing to get the best of both worlds. That is to use coded computing to tolerate stragglers and ensure data privacy, while using verifiable computing to tolerate Byzantine workers. Such a decoupled approach will lower the cost of tolerating Byzantine workers compared to LCC. We consider a general scenario in which the computation is distributed across $N$ nodes, and propose \emph{Adaptive Verifiable Coded Computing} (AVCC), a new framework to simultaneously mitigate stragglers, provide security against Byzantine workers and provide data privacy. Unlike LCC, AVCC only requires that
\begin{align}
\label{eqn:N_AVCC}
N \geq (K+T-1)\deg f+S+M+1.
\end{align}

\noindent Compared to (\ref{eqn:N_LCC}) in LCC, note that in (\ref{eqn:N_AVCC}) the cost of tolerating a Byzantine node is the same as that of a straggler node. Hence, instead of the $2M$ nodes required in LCC, AVCC only needs $M$ nodes, as we demonstrate in Section \ref{sec:implementation}.   

The key idea of AVCC is to use separate mechanisms for mitigating straggler effects and for tackling Byzantine nodes.  AVCC ensures the computational integrity by verifying the computation of each node independently using node's own compute results until it gets the minimum number of verified results required for decoding. This verification step can start as soon as the first node responds, unlike LCC which requires $N-S$ workers to respond before starting to decode. Hence, AVCC is also highly parallelizable. 

In principle, AVCC can be applied to the distributed computation of any polynomial $f$. However, AVCC is particularly suitable for matrix-vector, matrix-matrix multiplications and in machine learning applications such as linear regression and logistic regression, as verifying such computations is highly efficient \cite{Freivalds1977ProbabilisticMC}.

	
\section{Problem Setting}
\label{sec:model}
In this section, we describe our setting, the threat model and our guarantees. 
\subsection{System Setting and Threat Model}
We consider a distributed system with $N$ nodes and a main server, denoted as master, that distributes the data among the worker nodes. Given a dataset $\mathbf X=[\mathbf X_1^\top, \mathbf X_2^\top, \cdots, \mathbf X_K^\top]^\top$, our final goal is to compute $\mathbf Y_i=f(\mathbf X_i), \forall i \in [K]$. To this end, the main server first encodes that data into $N$ coded datasets denoted by $\tilde{\mathbf X}_1, \tilde{\mathbf X}_2, \cdots, \tilde{\mathbf X}_N$. The $i$-th  worker then receives $\tilde{\mathbf X}_i$,  computes $\tilde{\mathbf Y}_i=f(\tilde{\mathbf X}_i)$ and sends the result back to the main server for verification and decoding. The main server collects all computations from the non-straggling workers, first verifies them as we explain in the following sections and finally recovers the computation outcome $\mathbf Y_1, \cdots, \mathbf Y_K$ from the fastest workers that passed the verification.

We assume that the system has up to $S$ stragglers that have higher latency compared to the other workers. While the main server is trusted, the worker nodes can be dynamically malicious. Therefore, adversaries on the workers can have full control (root access) and design any attack. As a result, at any given time, some of the worker nodes can send arbitrary results to the main server to sabotage the computation. In addition, some workers may send incorrect computations unintentionally. Specifically, we assume that up to $M$ worker nodes can return erroneous computations with no limitation on their computational power. Finally, we assume that up to $T$ curious workers can collude to learn information about the dataset. 
\subsection{Guarantees}
Our goal is to design a scheme that provides the following guarantees.  
\begin{enumerate}
\item \textbf{$S$-Resiliency}. The computation outcome must be recovered even in the presence of $S$ stragglers. 
\item \textbf{$M$-Security}. This means that the system can tolerate up to $M$ workers sending erroneous computations, with no limitations on their computational capability, with an arbitrarily high probability based on verification overhead. 
\item \textbf{$T$-Privacy}. The workers must remain oblivious to the dataset in the information-theoretic sense even if $T$ of them collude. That is, for every set of at most $T$ workers denoted by $\mathcal T \subset [N]$, we must have
\begin{align}
    I(\mathbf X; \tilde{\mathbf X}_{\mathcal T})=0,
\end{align}
where $I(. ; .)$ denotes the mutual information~\cite{jaynes1957information} and $\tilde{\mathbf X}_{\mathcal T}$ denotes the encoded datasets at the workers in $\mathcal T$. 
\end{enumerate}

\section{Adaptive Verifiable Coded Computing (AVCC)}
\label{sec:implementation}

In this section, we present our adaptive verifiable coded computing (AVCC) framework, which consists of five key components: $1)$ Data encoding; $2)$ Verification key generation; $3)$ Integrity check; $4)$ Decoding; $5)$ Dynamic coding. We start with an example to illustrate the key components of AVCC.

\subsection{Illustrating Example}
\label{example1}
We focus on the logistic regression problem to illustrate how AVCC works. Given a dataset $\mathbf X \in \mathbb R^{m \times d}$ of $m$ data points and $d$ features and a label vector $\mathbfsl y \in \mathbb R^m$, the goal in logistic regression training  is to find the weight vector $\mathbfsl w \in \mathbb R^d$ that minimizes the cross entropy function
\begin{align}
C(\mathbfsl w)=\frac{1}{m}\sum_{i=1}^{m}(-y_{i}\log\hat{y_{i}}-(1-y_{i})\log(1-\hat{y_{i}})),
\end{align}
where $\hat{y_{i}}=h(\mathbfsl x_{i}\cdot\mathbfsl w) \in (0,1)$ is the estimated probability of label $i$ being equal to 1, $\mathbfsl x_{i}$ is the $i$-th row of $\mathbf X$, and $h(\cdot)$ is the sigmoid function $h(\theta)=1/(1+e^{-\theta})$. The gradient descent algorithm solves this problem iteratively by updating the model as 
\begin{align}
\mathbfsl w^{(t+1)}=\mathbfsl w^{(t)}-\frac{\eta}{m} \mathbf X^{\top} (h(\mathbf X\mathbfsl w^{(t)}) -\mathbfsl y),
\end{align}
where $\eta$ is the learning rate and the function $h(\cdot)$ operates element-wise on the vector $\mathbf X\mathbfsl w^{(t)}$. In this example, we provide a two-round protocol as follows.\footnote{The dataset and the weight vector at each round are quantized and represented over the finite field to guarantee information-theoretic privacy \cite{so2021codedprivateml}.} 

In the first round,  an intermediate vector $\mathbfsl z^{(t)} = \mathbf X \mathbfsl w^{(t)}$ is computed, which is then used to compute the predicted probability $h(\mathbfsl z^{(t)})$ and the prediction error vector $\mathbfsl e^{(t)} = h(\mathbfsl z^{(t)}) - \mathbfsl y$. In the second round, the gradient vector $\mathbfsl g^{(t)} = \mathbf X^{\top} \mathbfsl e^{(t)}$ is computed. 

We now illustrate how  to compute this logistic regression task in a distributed setting in the presence of stragglers and compromised compute nodes as depicted in Fig. \ref{fig:Overview}.
\begin{figure*}[t]
    \centering
    \includegraphics[scale=0.42, trim={2cm 0.4cm 4.5cm 0},clip]{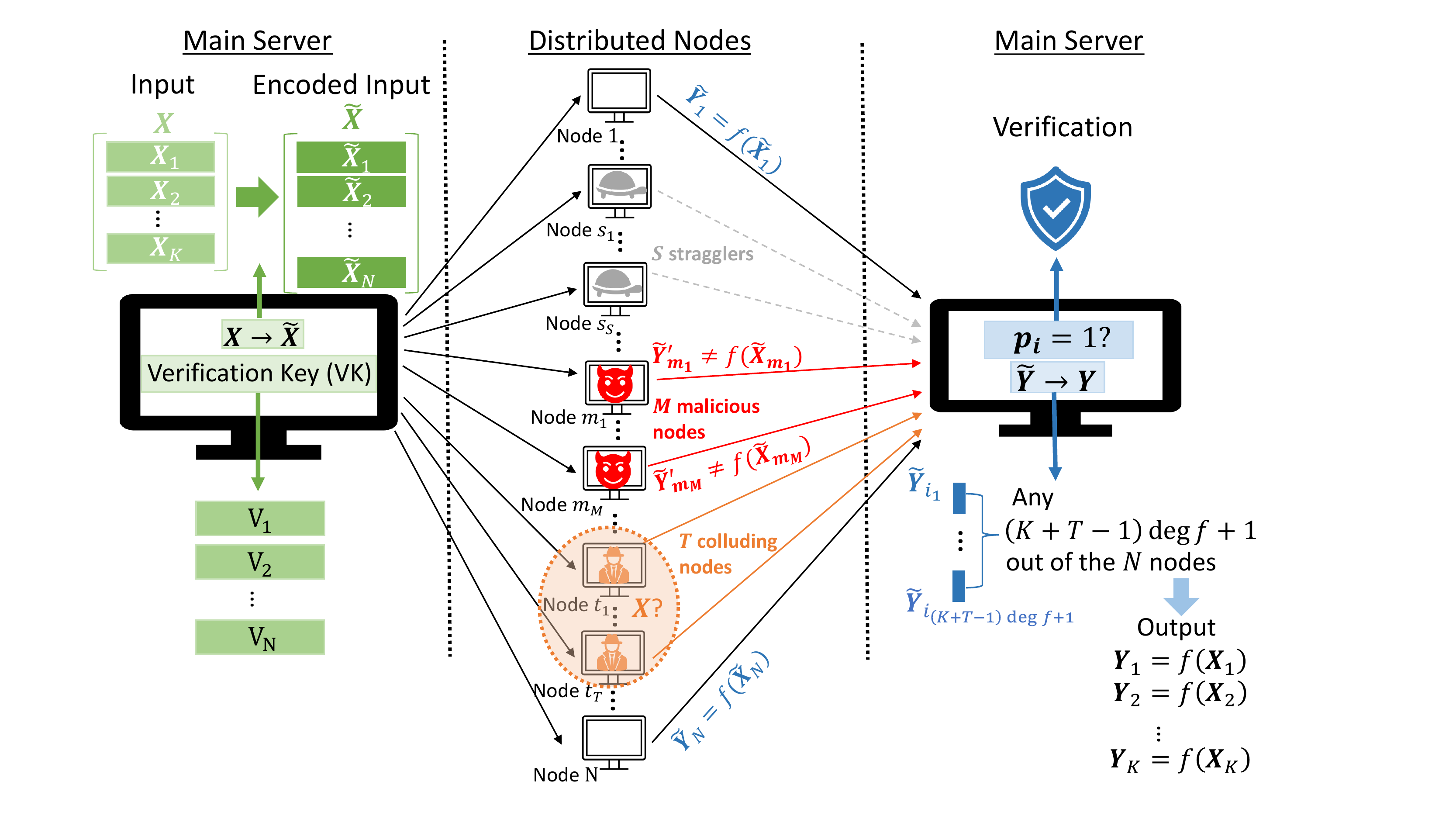}
    \caption{An overview of the Adaptive Verifiable Coded Computing (AVCC) framework is shown. In AVCC, the main server (master)  verifies the computation of each worker individually as soon as this worker sends its computation result using the initially computed verification keys. The main server then reconstructs the final output using the results of the fastest and verified workers.}
    \label{fig:Overview}
\end{figure*}

\begin{enumerate}
    \item \textbf{Data Encoding.} Before starting the computation, the main server partitions the dataset $\mathbf X$ into $K$ sub-matrices and encodes them using $(N,K)$-MDS coding. As stated earlier, MDS encoding is a special case of LCC encoding when the computations are only linear.  The main server then sends the coded sub-matrix $\tilde{\mathbf X}_i$ to the $i$-th worker, where $i \in [N]$.
    \item \textbf{Verification Key Generation.} The main server also computes a one-time verification key that helps in verifying the computation results returned by the worker nodes afterwards. Specifically, the main server chooses a  vector $\mathbfsl r_i^{(1)} \in \mathbb{F}^{m/K}_q$ and a vector $\mathbfsl r_i^{(2)} \in \mathbb{F}^{d/K}_q$ uniformly at random for each worker $i \in [N]$.
    
   The main server then computes the following private verification keys as in the Freivalds' algorithm \cite{Freivalds1977ProbabilisticMC} as follows 
    \begin{align}
    &\mathbfsl s_i^{(1)} \triangleq \mathbfsl r_i^{(1)} \tilde{\mathbf X}_i, \\
    &\mathbfsl s_i^{(2)} \triangleq \mathbfsl r_i^{(2)} \tilde{\mathbf X}_i^{\top }.
    \end{align}
    The main server then keeps these private verification keys for the two rounds along with $\mathbfsl r_i^{(1)}$ and $\mathbfsl r_i^{(2)}, \forall i \in [N]$. 
\item \textbf{Integrity Check.} At the first round of iteration $t$, when the $i$-th worker returns its result $\tilde{\mathbfsl z}_i^{(t)}$, which is $\tilde{\mathbf X}_i \mathbfsl w^{(t)}$ if this worker is not Byzantine, the main server checks the following equality 
\begin{align}
\mathbfsl s_i^{(1)} \cdot \mathbfsl w^{(t)} = \mathbfsl r_i^{(1)} \cdot \tilde{\mathbfsl z}^{(t)}_i.
\end{align}
At the second round of iteration $t$, when the $i$-th worker returns $\tilde{\mathbfsl g}^{(t)}_i$ which is $\tilde{\mathbf X}_i^\top \mathbfsl e^{(t)}$ if this worker is not Byzantine, the main server checks the equality 
\begin{align}
\mathbfsl s_i^{(2)} \cdot \mathbfsl e^{(t)} = \mathbfsl r_i^{(2)} \cdot \tilde{\mathbfsl g}^{(t)}_i.
\end{align}
These verification steps are done in only $O(m+d)$ arithmetic operations and are much faster compared to computing $\tilde{\mathbf z}^{(t)}_i$ and $\tilde{\mathbfsl g}^{(t)}_i$ which requires $O(\frac{m}{K}d)$ arithmetic operations. Through these verification steps, the main server can identify the Byzantine workers with high probability that depends on the finite field size $q$ \cite{sahraei2019interpol} as follows
\begin{align}
&\Pr[\mathbfsl r_i^{(1)}\cdot \tilde{\mathbfsl z}_i^{(t)} = \mathbfsl r_i^{(1)}\cdot \tilde{\mathbfsl z}'_i ] \leq \frac{1}{q}, \\
&\Pr[\mathbfsl r_i^{(2)}\cdot \tilde{\mathbfsl g}_i^{(t)} = \mathbfsl r_i^{(2)}\cdot \tilde{\mathbfsl g}'_i] \leq \frac{1}{q},
\end{align}
for any $\tilde{\mathbfsl z}'_i \ne \tilde{\mathbfsl z}_i^{(t)}$ and $\tilde{\mathbfsl g}'_i \ne \tilde{\mathbfsl g}_i^{(t)}$ . 
\item \textbf{Decoding.} The main server decodes the results in each round using the MDS decoding process with the additional constraint that each of the $K$ results it uses has been verified first as explained in step $3$. 

The decoding starts when the main server collects $K$ verified results. Following the property of MDS coding that any $K\times K$ sub-matrix formed by any $K$ columns of the $K\times N$ encoding matrix is invertible, the decoding algorithm is simply to multiply the result matrix formed by concatenating the returned results from $K$ verified workers, with the inverse of the matrix formed by the $K$ columns of the encoding matrix corresponding to the indices of the $K$ verified workers. Thus, the main server can decode and recover the final output using $K$ verified results, instead of just using the first $K$ results. For Byzantine workers that fail the verification, they are effectively treated as stragglers and their results are not used for decoding. 

\item \textbf{Dynamic Coding.} Ignoring a Byzantine worker's result comes at the cost of reduced straggler tolerance as the main server has to wait for an additional verified result. In the original MDS coding strategy, $N-K$ stragglers can be tolerated.  If the original MDS straggler tolerance is still desired, the main server changes the coding strategy to $(N-1, K-1)$ to account for this Byzantine worker. In this coding strategy, each worker now performs more work but only $K-1$ results are needed to decode. Thus, this dynamic coding approach enables the main server to trade-off redundant work with Byzantine tolerance. 
\end{enumerate}

\begin{remark}(\name Improvements over LCC).
\name has two key improvements over LCC. First, \name relies on LCC for straggler tolerance and ensuring privacy only but it depends on verifiable computing to identify the Byzantine workers. This verification process can be done independently for each worker node without waiting for the results of other workers. LCC, in contrast, has to wait for the results of a sufficient number of workers before identifying the Byzantine workers.  This allows \name to use less number of workers compared to LCC. Second, \name uses a dynamic coding approach to automatically trade-off straggler tolerance for higher Byzantine protection and vice-versa.
\end{remark}
\subsection{Generalized \name}
\label{AVCCdescr}
As explained in Section \ref{example1}, AVCC is particularly suitable for matrix-vector and matrix-matrix
multiplications as the information-theoretic verification schemes of such computations are efficient \cite{sahraei2019interpol}. Such computations are essential in several machine learning applications such as linear regression and logistic regression. However, in principle, AVCC can be applied to any polynomial $f$. We now explain the encoding, the verification and the decoding of AVCC. 
\begin{enumerate}
\item \textbf{Data Encoding.} In AVCC, the dataset $\mathbf X=(\mathbf X_1^\top, \mathbf X_2^\top, \cdots, \mathbf X_K^\top)^\top$ is encoded as in LCC, but AVCC requires less number of workers. Specifically, the encoding is as follows. First, a set of $K+T$ distinct elements denoted by $\mathcal B=\{\beta_1, \cdots, \beta_{K+T}\}$ are chosen from $\mathbb F_q$ and an encoding polynomial of degree at most $K+T-1$ is constructed such that $u(\beta_j)=\mathbf X_j$ for $j \in [K]$ and $u(\beta_j)=\mathbf W_j$ for $j \in \{K+1, \cdots, K+T\}$, where $\mathbf W_j$ is chosen uniformly at random. Such a polynomial can be constructed as follows
\begin{align}
    u(z)=\sum\limits_{j=1}^{K} \mathbf X_j \ell_j(z)+ \sum\limits_{j=K+1}^{K+T} \mathbf W_j \ell_j(z),
\end{align}
where $\ell_j(z)$ is the Lagrange monomial defined as follows  \begin{align}
    \ell_j(z)=\prod_{k \in [K+T] \setminus \{j\} } \frac{z-\beta_k}{\beta_j-\beta_k},
\end{align}
for $j \in [K+T]$. Next, another set of $N$ distinct points denoted by $\mathcal A=\{\alpha_1, \cdots, \alpha_N\}$ is selected from $\mathbb F_q$. If $T>0$, then the sets are selected such that $\mathcal A \cap \mathcal B=\emptyset$. The main server then sends $\tilde{\mathbf X}_i=u(\alpha_i)$ to the $i$-th worker which is required to compute  $f(\tilde{\mathbf X}_i)=f(u(\alpha_i))$, where we note that 
\begin{align}
\deg f(u(z)) \leq (K+T-1) \deg f. 
\end{align}

The main difference between the encoding in LCC and the encoding of AVCC is that LCC requires that $N \geq (K+T-1)\deg f+S+2M+1$, whereas AVCC only requires that $N \geq (K+T-1)\deg f+S+M+1$.

It is also worth noting that when $T=0$ and $\deg f=1$ in AVCC, we can encode the dataset using an $(N, K)$ MDS code as illustrated in Subsection \ref{example1}.

\item \textbf{Verification Key Generation.} The main server also computes a random private verification key $\mathbf V_i$ for each worker  $i$ which depends on $\tilde{\mathbf X}_i$ and the polynomial $f$. 
\item \textbf{Integrity Check}. When the main server receives the result of the $i$-th worker, it verifies the correctness of this result using the private verification key $\mathbf V_i$. We denote the binary output of the verification algorithm for the $i$-th worker by $p_i$, where $p_i=1$ if this worker passes the verification test and $p_i=0$ otherwise. 

If the $i$-th worker returns the correct computation result $\tilde{\mathbf Y}_i=f(\tilde{\mathbf X}_i)$, then the main server accepts the result with probability $1$. Otherwise, the main server detects this malicious behavior regardless of the computational power of this worker with high probability \cite{sahraei2019interpol}. Specifically, we have
\begin{align}
\Pr(p_i=0, \tilde{\mathbf Y}_i^{'} \neq f(\tilde{\mathbf X}_i)) \geq 1-o(1),
\end{align}
where the term $o(1)$ is a term that vanishes as the finite field size $q$ grows. 
\item \textbf{Decoding.} Once the main server collects $(K+T-1) \deg f+1$ verified results, it then interpolates the polynomial $f(u(z))$ and reconstructs the computation outcome by evaluating the polynomial $f(u(z))$ at $\beta_i$ $\forall i \in [K]$ as $f(u(\beta_i))=f(\mathbf X_i)$.

\item \textbf{Dynamic Coding.}
The main server may decide to dynamically reconfigure the coded data distribution, based on the observed system behaviour in the previous iteration(s). 
Assume our system has $N$ workers and initially uses $(N, K)$ MDS coding, where $N=K+M+S+T$. We claim that the system tolerates up to $S$ stragglers, $M$ Byzantine workers and $T$ colluding workers. Our strategy changes the dimension of the code and the code length dynamically based on the history. We denote the dimension of the code at time $t$ by $K_t$ and the number of workers in the system at time t by $N_t$. That is, we use $(N_t, K_t)$ MDS code at iteration $t$. Suppose that at iteration $t$, we detect $S_t$ stragglers and $M_t$ malicious workers, and there are potentially $T_t$ colluding workers in the system, where $S_t \leq S$, $M_t \leq M$ and $T_t \leq T$. We define a parameter $A_t$ that shows how many additional stragglers we can tolerate in the future iterations before we suffer from the tail latency as follows
\begin{align}
A_t = N_t-M_t-S_t-K_t-T_t.
\end{align}
This parameter determines that the new coding scheme at iteration $t+1$ should be as follows
\begin{align}
(N_{t+1}, K_{t+1}) &=
\begin{cases}
(N_t-M_t, K_t) & \text{if} \  A_t \ \geq 0,\\
(N_t-M_t, K_t+A_t) & \text{if} \ A_t <0.
\end{cases}
\end{align}
That is, our strategy is as follows. If $A_t \geq 0$, we do not have to wait for any stragglers to complete the computation. However, when $A_t <0$ this indicates that we already suffer from stragglers and we need to adapt our coding scheme with the available nodes we have. To do so, we need to reduce the dimension of the code as well as the code length. But re-encoding the data and verification keys based on the new coding scheme can be a performance bottleneck. For this reason, in the preprocessing phase before the application starts, we generate encoded data as well as verification keys of different coding configurations offline. Therefore, we have the flexibility to use them dynamically. 

Similar strategy can be applied with minor modification when the system encodes using Lagrange coding. In the case of using Lagrange coding, we set $A_t$ as follows 
\begin{align}
A_t = N_t-M_t-S_t-(K_t+T_t-1)\deg f,
\end{align}
and the new coding scheme at iteration $t+1$ is as follows
\begin{align}
(N_{t+1}, K_{t+1}) &=
\begin{cases}
(N_t-M_t, K_t) & \text{if} \  A_t \ \geq 0,\\
(N_t-M_t, K_t+\lfloor \frac{A_t}{\deg f}\rfloor) & \text{if} \ A_t <0.
\end{cases}
\end{align}
\end{enumerate}
Theorem \ref{theorem:N_AVCC} characterizes the set of all feasible $S$-resilient, $M$-secure, and $T$-private schemes that AVCC achieves.
\begin{theorem}
\label{theorem:N_AVCC}
Given a number of workers $N$ and a dataset $\mathbf X=(\mathbf X_1^\top, \mathbf X_2^\top, \cdots, \mathbf X_K^\top)^\top$, AVCC provides an $S$-resilient, $M$-secure, and $T$-private scheme for computing ${\{f({\mathbf X}_i)\}}_{i=1}^K$ for any polynomial $f$, as long as
\begin{align}
N \geq (K+T-1)\deg f+S+M+1.
\end{align}
\end{theorem}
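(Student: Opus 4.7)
The plan is to prove the three guarantees ($S$-resiliency, $M$-security, $T$-privacy) separately, using the AVCC construction of Subsection \ref{AVCCdescr}. The three structural facts I will lean on are the degree bound $\deg f(u(z)) \leq (K+T-1)\deg f$ on the composite univariate polynomial $h(z) := f(u(z))$, the per-worker Freivalds-style integrity check, and the masking of the data by $T$ uniform independent padding blocks $\mathbf W_{K+1},\ldots,\mathbf W_{K+T}$ built into the Lagrange encoding polynomial $u(z)$.

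For $S$-resiliency I would first note that reconstruction reduces to polynomial interpolation: from any $\deg h + 1 = (K+T-1)\deg f + 1$ correct evaluations of $h$, the master can recover $h$ and then evaluate $h(\beta_j) = f(u(\beta_j)) = f(\mathbf X_j)$ for $j\in[K]$. In the worst case, the $S$ stragglers do not respond and the $M$ Byzantine workers' outputs are rejected by the integrity check, so the number of verified honest evaluations available is at least $N - S - M$. The hypothesis $N \geq (K+T-1)\deg f + S + M + 1$ immediately gives that this count meets the interpolation threshold, yielding the desired resiliency.

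For $M$-security I would invoke the verification step of Subsection \ref{AVCCdescr}: each worker $i$ is checked independently against the private key $\mathbf V_i$, and by the Freivalds guarantee (as instantiated in Subsection \ref{example1}) any wrong output is accepted with probability at most $1/q$. A union bound over the at most $M$ Byzantine workers gives a failure probability of at most $M/q$, which is $o(1)$ as $q$ grows. Hence with probability $1-o(1)$ only genuinely correct evaluations of $h$ enter the decoder, and combined with the $S$-resiliency count above the reconstructed polynomial equals the true $h$ and thus recovers $\{f(\mathbf X_j)\}_{j=1}^K$.

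For $T$-privacy I would fix any $\mathcal T \subset [N]$ with $|\mathcal T| \leq T$ and argue that, conditioned on $\mathbf X$, the encoded shares $\tilde{\mathbf X}_{\mathcal T} = (u(\alpha_i))_{i\in\mathcal T}$ are a linear function of the uniform independent padding $(\mathbf W_{K+1},\ldots,\mathbf W_{K+T})$ with coefficient matrix $M_{ij} = \ell_{K+j}(\alpha_i)$. Under the construction's hypothesis $\mathcal A \cap \mathcal B = \emptyset$ together with the distinctness of the $\alpha$'s and $\beta$'s, this Cauchy-type matrix has full row rank for any $|\mathcal T|\leq T$, so $\tilde{\mathbf X}_{\mathcal T}$ is uniformly distributed and independent of $\mathbf X$, giving $I(\mathbf X;\tilde{\mathbf X}_{\mathcal T})=0$. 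The main obstacle I anticipate is this last step, namely cleanly establishing the full-rank property of the Lagrange coefficient submatrix from the disjointness hypothesis; the resiliency counting and the Freivalds union bound for security are essentially mechanical once the construction is in hand.
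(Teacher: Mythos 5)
Your proposal is correct and follows essentially the same route as the paper's proof: interpolation counting on $f(u(z))$ with $\deg f(u(z))\leq (K+T-1)\deg f$ plus per-worker Freivalds verification for $S$-resiliency and $M$-security, and the uniform-padding masking argument for $T$-privacy. The only cosmetic differences are that you make the union bound over the $M$ Byzantine workers explicit, and that where the paper simply cites Lemma 2 of the LCC paper for invertibility of the relevant $T\times T$ submatrices, you propose to establish the full-rank property directly from the Cauchy structure of the Lagrange coefficients $\ell_{K+j}(\alpha_i)$ (using $\mathcal A\cap\mathcal B=\emptyset$), which is exactly how that lemma is obtained.
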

\begin{proof}
We start by showing that AVCC is $S$-resilient and $M$-secure and then show it is $T$-private.
\begin{itemize}
\item \textbf{$S$-resiliency and $M$-security}. Since LCC uses Reed-Solomon decoder to identify the Byzantine workers, it requires $2M$ additional workers in order to tolerate $M$ malicious workers. Unlike LCC, AVCC mitigates Byzantine workers by verifying each received worker's result independently. If the worker returns the correct computation result, then the verification algorithm accepts the result with probability $1$. Otherwise, the verification algorithm rejects the erroneous result regardless of the computational power of the worker with a probability at least $1-o(1)$, where the term $o(1)$ is a term that vanishes as the finite field size grows. Hence, AVCC requires only $M$ additional worker results to tolerate $M$ Byzantine workers.\\ Specifically, since AVCC encodes the data the same way as LCC does as described in Section \ref{AVCCdescr}, it follows that AVCC is $S$-resilient and $T$-private as LCC. More specifically, the encoding polynomial $u(z)$ is constructed with $K+T$ distinct points and hence it is of degree at most $K+T-1$. The computation at the worker side is to apply $f$ on the encoded data, that is, to evaluate $f(u(z))$, and the composed polynomial $f(u(z))$ satisfies $\deg f(u(z))\leq (K+T-1)\deg f$. \\
To recover $f(X_i)$, the master first interpolates $f(u(z))$, and then evaluates $f(u(z))$ at $\{\beta_i\}_{i \in [K]}$. To interpolate $f(u(z))$, the master needs a minimum of $\deg f(u(z))+1$ correct evaluations, that is,  $\deg f(u(z))+1$ correct worker results. Since $\deg f(u(z))\leq (K+T-1)\deg f$ and $N \geq (K+T-1)\deg f+S+M+1$, the master is ensured to obtain $\deg f(u(z))+1$ correct worker results and then obtain all coefficients of $f(u(z))$. This is because the master has at least $(K+T-1)\deg f+1$ correct results without the results of the at most $S$ stragglers and after discarding the results of at most $M$ malicious workers.  Hence, we have shown that the AVCC scheme is $S$-resilient and $M$-secure.
\item \textbf{$T$-privacy.} We  recall that AVCC uses the same encoding method as LCC. This encoding can be represented as
\begin{align*}
\tilde{\mathbf X}_i=u(\alpha_i)=(\mathbf X_1, \cdots, \mathbf X_K, \mathbf W_{K+1}, \cdots, \mathbf W_{K+T}). \mathbf U_i,
\end{align*}
where $\mathbf U \in \mathbb F_q^{(K+T) \times N}$ is the encoding matrix, $${\mathbf U}_{i,j} \triangleq \prod_{\mathit k \in [K+T] \setminus \{i\} } \frac{\alpha_j-\beta_k}{\beta_i-\beta_k}$$ and $\mathbf U_i$ is the $i$-th column of $\mathbf U$. Then, it follows from Lemma 2 in \cite{yu2019lagrange} that every $T \times T$ submatrix of the bottom $T \times N$ submatrix $\mathbf U^{\textrm{bottom}}$ of the encoding matrix $\mathbf U$ is invertible. Thus, for every set of $T$ workers denoted by $\mathcal T \subset [N]$, their encoded data $\tilde{\mathbf X}_{\mathcal T}={\mathbf X}\mathbf U_{\mathcal T}^{\textrm{top}}+{\mathbf W}\mathbf U_{\mathcal T}^{\textrm{bottom}}$, where $\mathbf W=(\mathbf W_{K+1},\cdots,\mathbf W_{K+T})$, and $\mathbf U_{\mathcal T}^{\textrm{top}} \in {\mathbb F}_q^{K\times T}, \mathbf U_{\mathcal T}^{\textrm{bottom}}\in {\mathbb F}_q^{T\times T}$ are the top and bottom submatrices which is formed by columns in $\mathbf U$ that are indexed by ${\mathcal T}$. Since $\mathbf U_{\mathcal T}^{\textrm{bottom}}$ is invertible, the added random padding ${\mathbf W}\mathbf U_{\mathcal T}^{\textrm{bottom}}$ is uniformly random. Hence, the coded data ${\mathbf X}\mathbf U_{\mathcal T}^{\textrm{top}}$ is completely masked by the uniformly random mask. This guarantees that the AVCC scheme is $T$-private.
\end{itemize}
\end{proof} 

\section{Experimental setup}
\label{sec:experimentalsetup}
In this section, we describe our experimental setup. We present an empirical study of the performance of AVCC compared to LCC as well as the uncoded baseline. Our focus is on training a logistic regression model for image classification, while the computation load is distributed on multiple nodes on the DCOMP testbed platform \cite{238258}. In our experimental setup, we focus on the case where $T=0$.

We train the logistic regression model described in Section \ref{example1} for binary image classification on the GISETTE dataset \cite{10.5555/2976040.2976109}  to experimentally examine two aspects: the performance gain of AVCC in terms of the training time and accuracy, and the trade-off between various dynamic coding strategies. The size of the GISETTE dataset is $(m,d) = (6000,5000)$. Our experiments are deployed on a cluster of $13$ Minnow instances on a DCOMP testbed, where $1$ node serves as the main server and the remaining $N = 12$ nodes are worker nodes. Each Minnow node is equipped with a quad-core Intel Atom-E processor, $2$GB of RAM and two $1$ GbE network interfaces.

We use $(N,K,S,M) = (12,9,1,1)$ configuration for the LCC baseline in the experiments. For \name we use $(N,K,S+M)=(12,9,3)$. Recall that the encoding process of \name only relies on $N,K$ and hence $S, M$ play no role in the encoding.  

We also implement an uncoded baseline which has no redundancy and only $9$ out of the $12$ workers participate in the computation, each of them storing and processing $\frac{1}{9}$ fraction of uncoded rows from the input matrix. The main server waits for all $9$ workers to return, and does not need to perform decoding to recover the result.

\textbf{Quantization and Parameter Selection.} 
Since the Lagrange coding and the integrity check technique work over a finite field $\mathbb F_q$, but the inputs and the weights for the model training are often defined in real domain, AVCC needs to quantize the inputs and model weights to integers as 
\begin{align}
x_r = \mathrm{round}(2^l \cdot x),
\end{align}
where $x$ represents a floating point number and $l$ is the quantization parameter (number of precision bits). We then embed these integers in the finite field $\mathbb F_q$ of integers modulo a prime $q$. If the integer is negative, we represent it in the finite field using the two's complement representation. When the computation results of the workers are received by the main server, $q$ is subtracted from all the elements larger than $\frac{q-1}{2}$ to restore the negative numbers. The results are then scaled by $2^{-l}$ to convert them back to real numbers. There are many prior works that use quantization schemes in training machine learning models~\cite{gupta2015deep,so2020byzantine, hashemi2021darknight} without noticeable loss in accuracy. 

Matrix multiplication and vector inner product operations are performed in the logistic regression application. Hence, it is necessary to select the field size of each operand to be such that the worst-case computation output still fits within the finite field to avoid a wrap-around which may lead to an overflow error. The Minnow nodes use a $64$-bit implementation, and the number of features in the GISETTE dataset is $d=5000$. Hence, the worst-case operation must satisfy $d(q-1)^2 \leq 2^{63}-1$. As such, we select the finite field size to be $q = 2^{25}-39$ (the largest prime number with 25 bits) to satisfy this limitation. 
 In our experimental setup, the GISETTE dataset values are all non-negative integers and fit within the selected finite field. Hence,  no quantization is necessary. For the model weights, we optimize the quantization parameter to $l=5$ by taking into account the trade-off between the rounding and the overflow error. Note that the bias is implemented as part of the weights, so it shares the same quantization parameter as the weights.

\textbf{Byzantine Attack Models.} We consider the following Byzantine attack models that are widely considered in previous works~\cite{prakash2020mitigating, subramaniam2019collaborative, hashemi2021byzantine}.
\begin{itemize}
    \item \textbf{Reversed Value Attack.} In this attack, the Byzantine workers that were supposed to send $\mathbfsl{z} \in \mathbb F_{q}^{m/K}$ to the main server instead send $-c\mathbfsl{z}$, for some $c>0$. We set $c=1$ in our experiment.
    \item \textbf{Constant Byzantine Attack.} In this model, the Byzantine workers always send a constant vector to the main server with dimension equal to that of the true result.  
\end{itemize}

\textbf{End-to-end Convergence Performance}. We evaluate the end-to-end convergence performance of AVCC, under two setups and the two attack models, and compare it to LCC and the uncoded approaches. The LCC baseline is designed for ($S=1, M=1$), so it requires $K + S + 2M=12$ workers. AVCC, LCC and the uncoded baseline are all trained for $50$ iterations. For \name we use two different  setups while staying within the constraints that $12 \geq 9 + S + M $ as follows. 

\begin{enumerate}
    \item $S = 1, M =2$. In this setup, up to $2$ Byzantine nodes may be tolerated but at the expense of reducing the straggler tolerance from a maximum of $3$ nodes to only $1$ node. 
    \item $S=2, M=1$. In the second setup, we reduce the Byzantine tolerance to $1$ node, while reducing straggler tolerance to $2$ nodes. 
\end{enumerate}

\section{Experimental Results}
We now present extensive experimental results showing the performance gain of AVCC over LCC and the uncoded baselines.
\label{sec:experiments}
\begin{figure*}[t]
    \centering
    \begin{minipage}{.25\textwidth}
    \centering
    \includegraphics[scale=0.38]{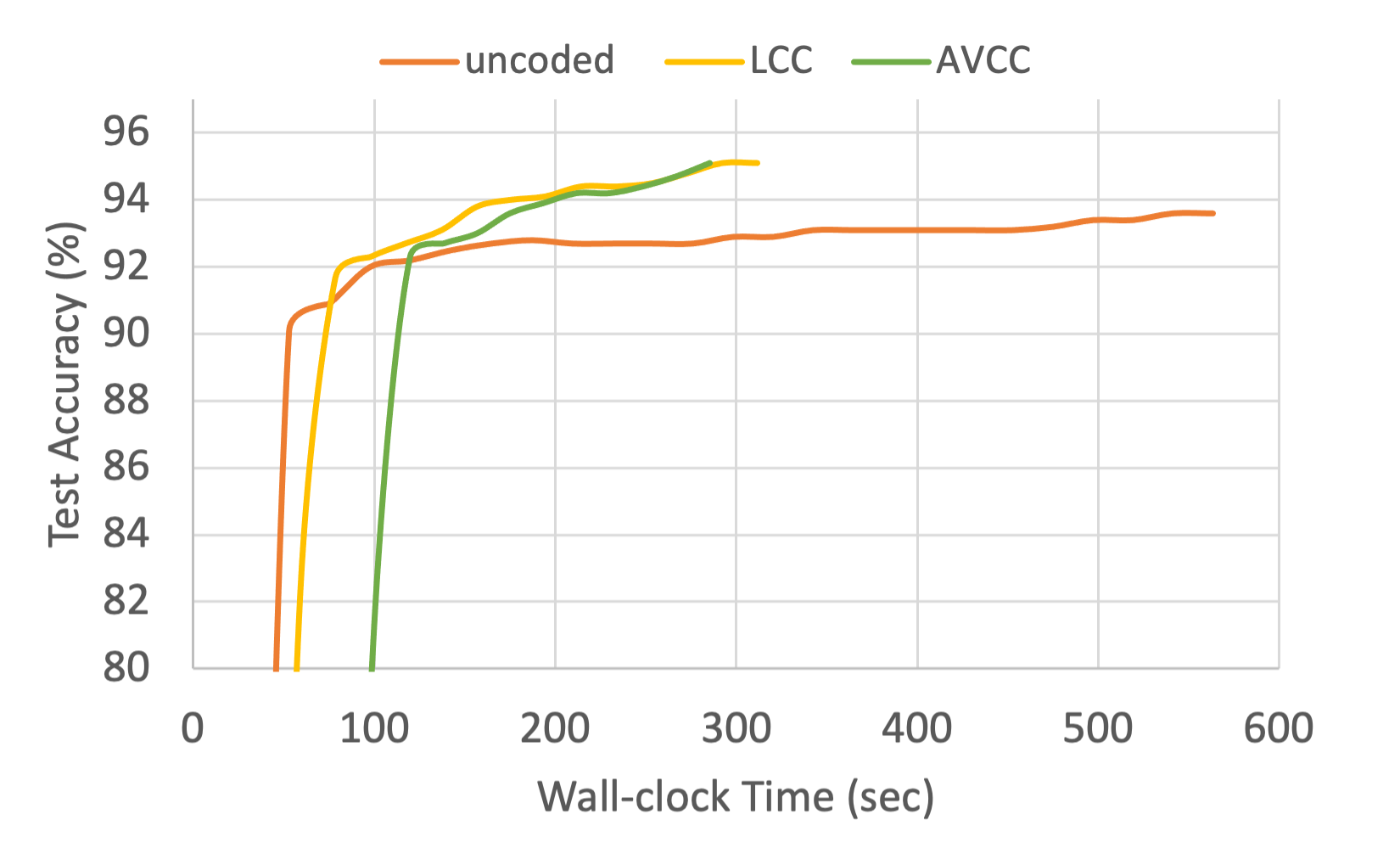}
    \subcaption{Reverse $S=2, M=1$}
    \end{minipage}%
    \begin{minipage}{.25\textwidth}
    \centering
    \includegraphics[scale=0.38]{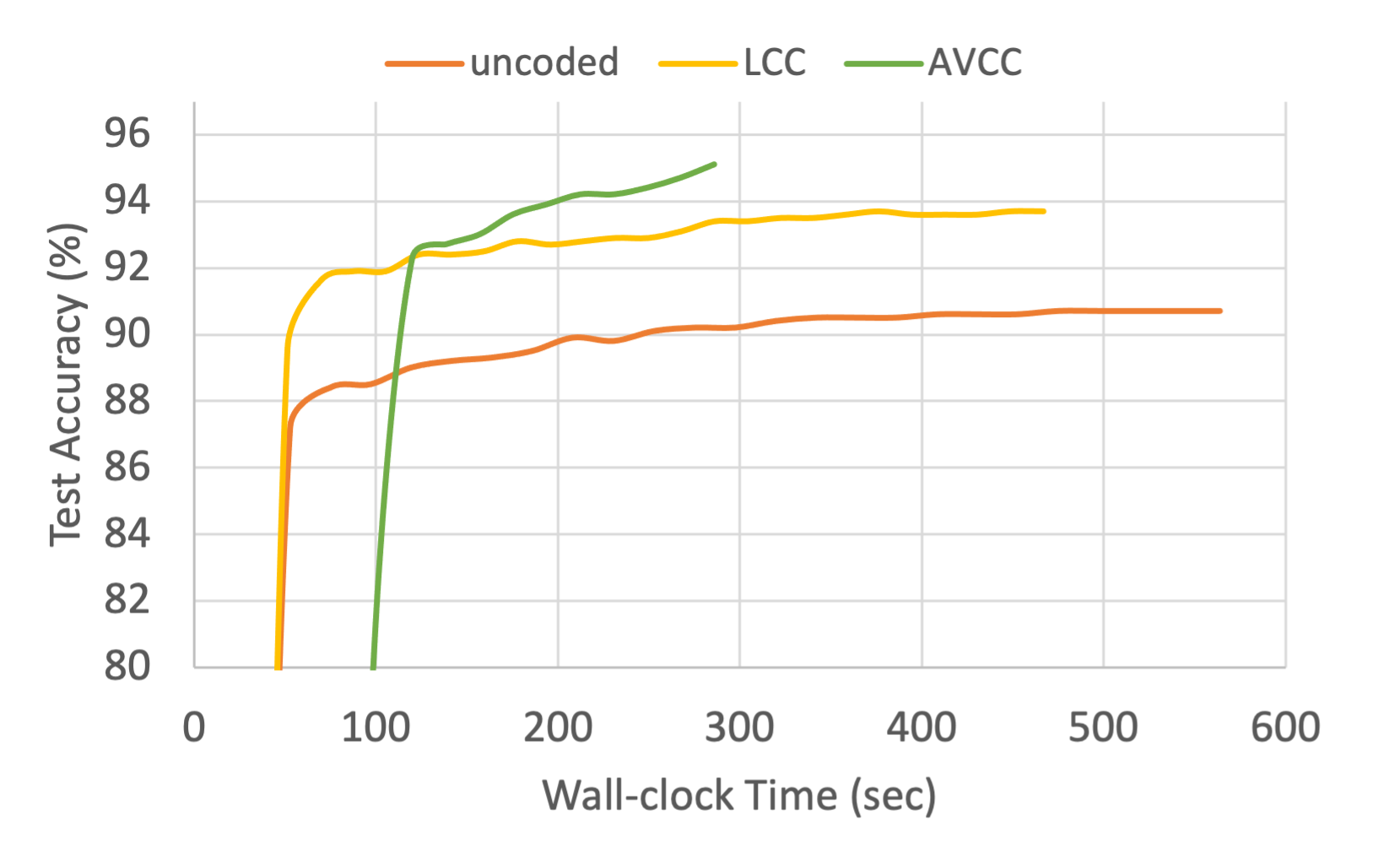}
    \subcaption{Reverse $S=1, M=2$}
    \end{minipage}%
    \begin{minipage}{.25\textwidth}
    \centering
    \includegraphics[scale=0.38]{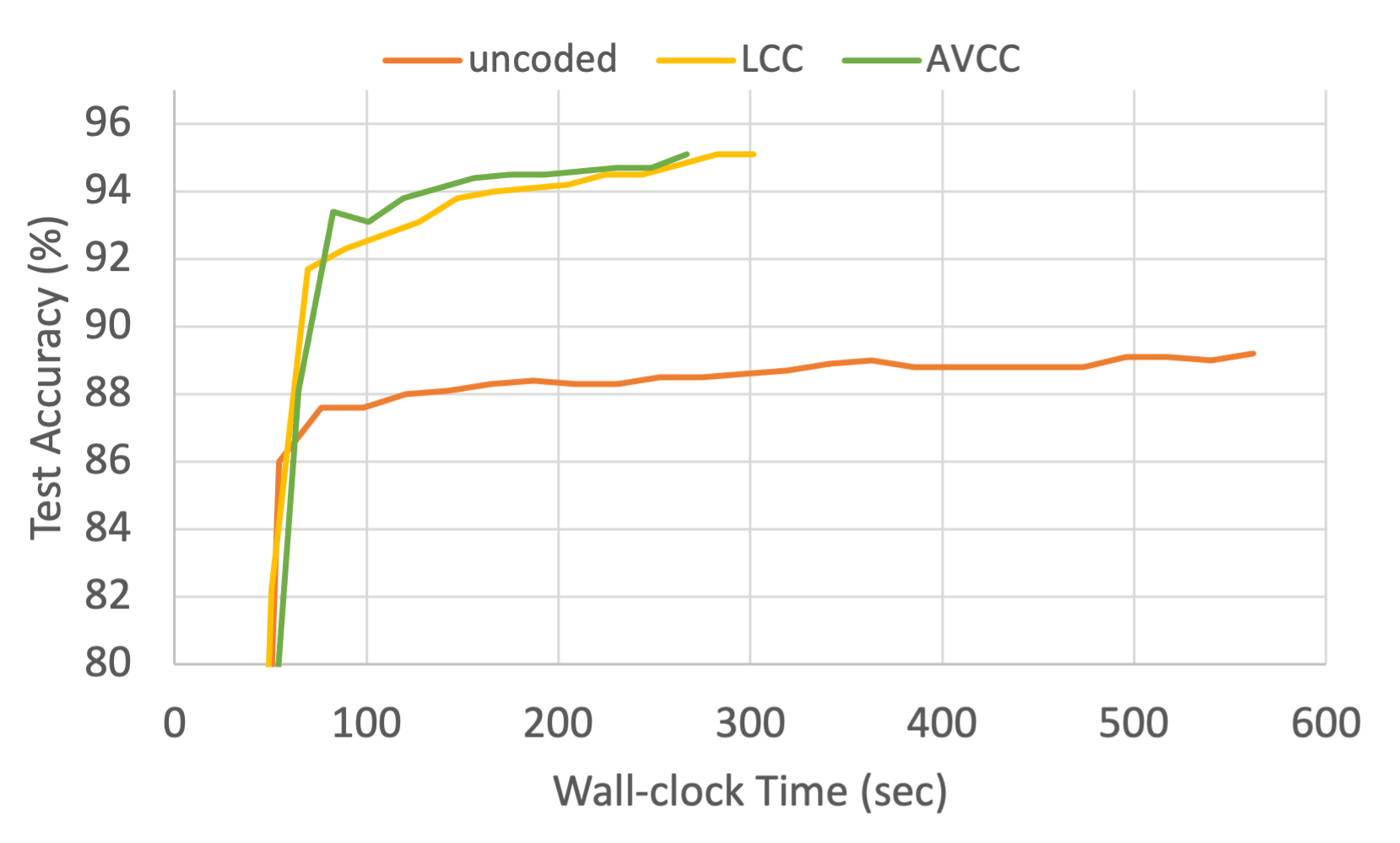}
    \subcaption{Constant $S=2, M=1$}
    \end{minipage}%
    \begin{minipage}{.25\textwidth}
    \centering
    \includegraphics[scale=0.38]{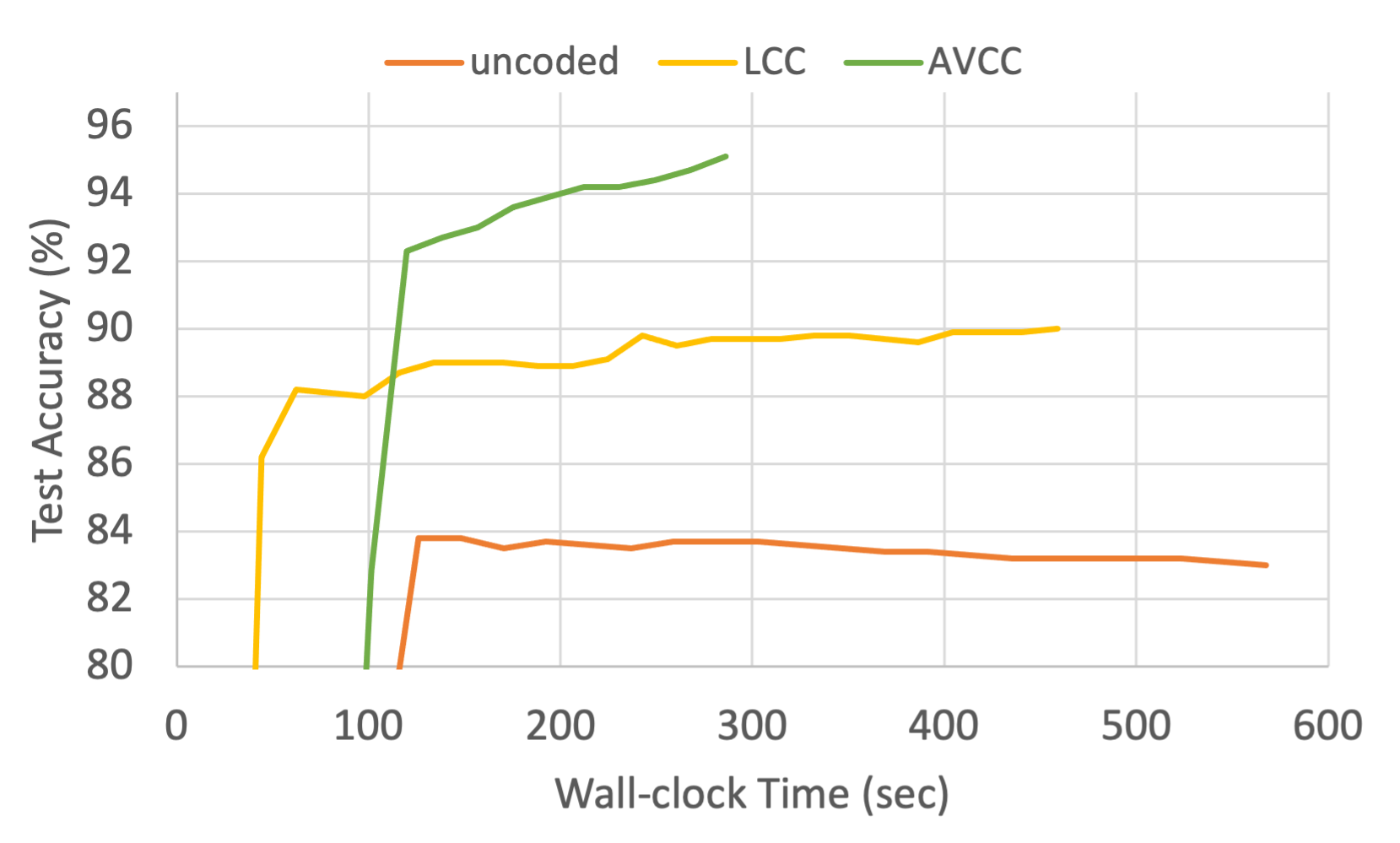}
    \subcaption{Constant $S=1, M=2$}
    \end{minipage}%
    \caption{Convergence performance of AVCC, LCC and uncoded methods with (a) $S=2, M=1$ under reverse value attack, (b) $S=1, M=2$ under reverse value attack, (c) $S=2, M=1$ under constant attack and (d) $S=1, M=2$ under constant attack.}
    \label{fig:Attack}
\end{figure*}

\textbf{Accuracy.} We first consider the reverse value attack. This is a weak attack, as the small values produced during the matrix-vector operations when added or subtracted do not derail the overall training convergence  dramatically.

Fig. \ref{fig:Attack}(a) shows the test accuracy under the reverse value attack when there is only one Byzantine node in the system. In this case, since LCC is designed for ($S=1,M=1$), it converges to the same accuracy as AVCC under ($S=2, M=1$) setup, but \name reaches this accuracy level faster than LCC, because LCC by design needs to wait for $11$ worker results before it can start decoding, whereas \name requires only $9$ verified results. Under ($S=2, M=1$) setup, LCC is bound to suffer tail latency from the faster of the two stragglers, while \name in the worst-case can start decoding after verifying results from the $9$ honest workers and rejecting the Byzantine worker's result, without the need to wait for any of the two stragglers.

Fig. \ref{fig:Attack}(b) shows how the test accuracy varies with training time under the reverse value attack when there are two Byzantine nodes in the system.  Recall that LCC is able to handle only one Byzantine node with $N=12$, $K=9$ and $S=1$ by design. In order to handle two Byzantine nodes (while keeping $S=1$), LCC should either increase $N$ to be $14$ or decrease $K$ to be $7$. In the first case, LCC would use more worker nodes (two worker nodes) and correspondingly is more expensive to implement. In the second case, LCC has to assign a lot more work to each worker node, thereby increasing the overall execution latency of each worker. However, \name can dynamically adapt to the changing straggler and Byzantine conditions and automatically adapt to the scenario where there are more Byzantines by trading-off straggler tolerance. Hence, with the ($S=1, M=2$) setup, \name converges to higher accuracy than LCC. 
Note that in both scenarios in Fig. \ref{fig:Attack}(a) and Fig. \ref{fig:Attack} (b), the uncoded scheme does not converge to the same accuracy as \name since it is unable to detect and isolate the Byzantine workers. Hence, the incorrect computations of the Byzantine workers degrade the overall accuracy.  

Fig. \ref{fig:Attack}(c) shows how the test accuracy varies with training time under the constant attack with one Byzantine node. Compared to the reverse value attack, the constant attack is a stronger attack as the malicious behaviour forces all values to a constant value and often causes a considerable accuracy degradation.  LCC is able to detect and isolate the one Byzantine node and hence it reaches an accuracy that is on par with \name.  \name is able to achieve the accuracy at least $10\%$ faster. However, \name shines when the number of Byzantine nodes increases past one. 
\begin{figure*}[h]
    \begin{minipage}[t]{.32\textwidth}
    \includegraphics[width=1\linewidth]{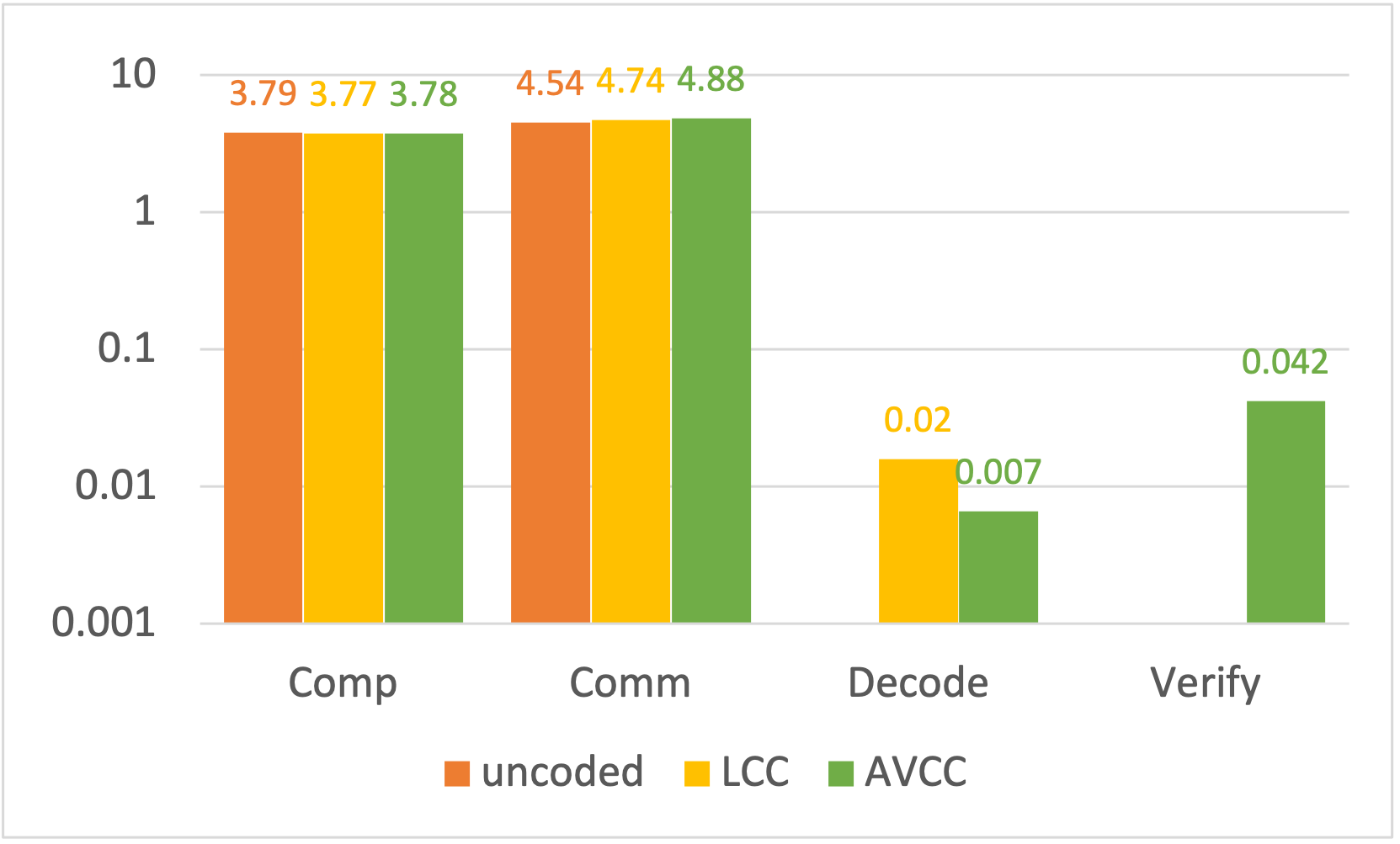}
    \subcaption{$S=0, M=0$}
    \end{minipage}\hfill
    \begin{minipage}[t]{.32\textwidth}
    \centering
    \includegraphics[width=1\linewidth]{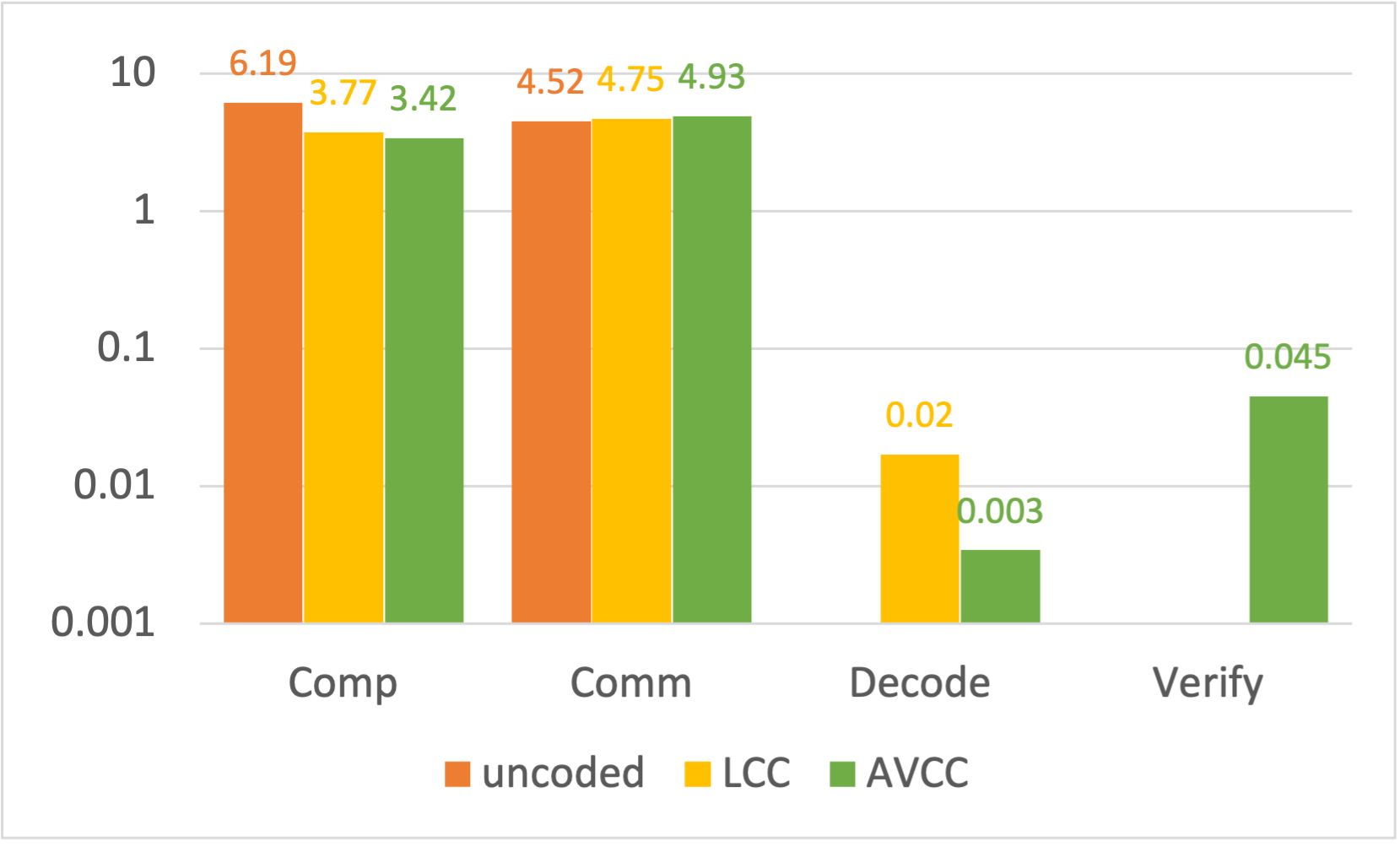}
    \subcaption{$S=1, M=2$ (LCC test accuracy $93.7\%$, AVCC test accuracy $95.1\%$)}
    \end{minipage}\hfill
    \begin{minipage}[t]{.32\textwidth}
    \centering
    \includegraphics[width=1\linewidth]{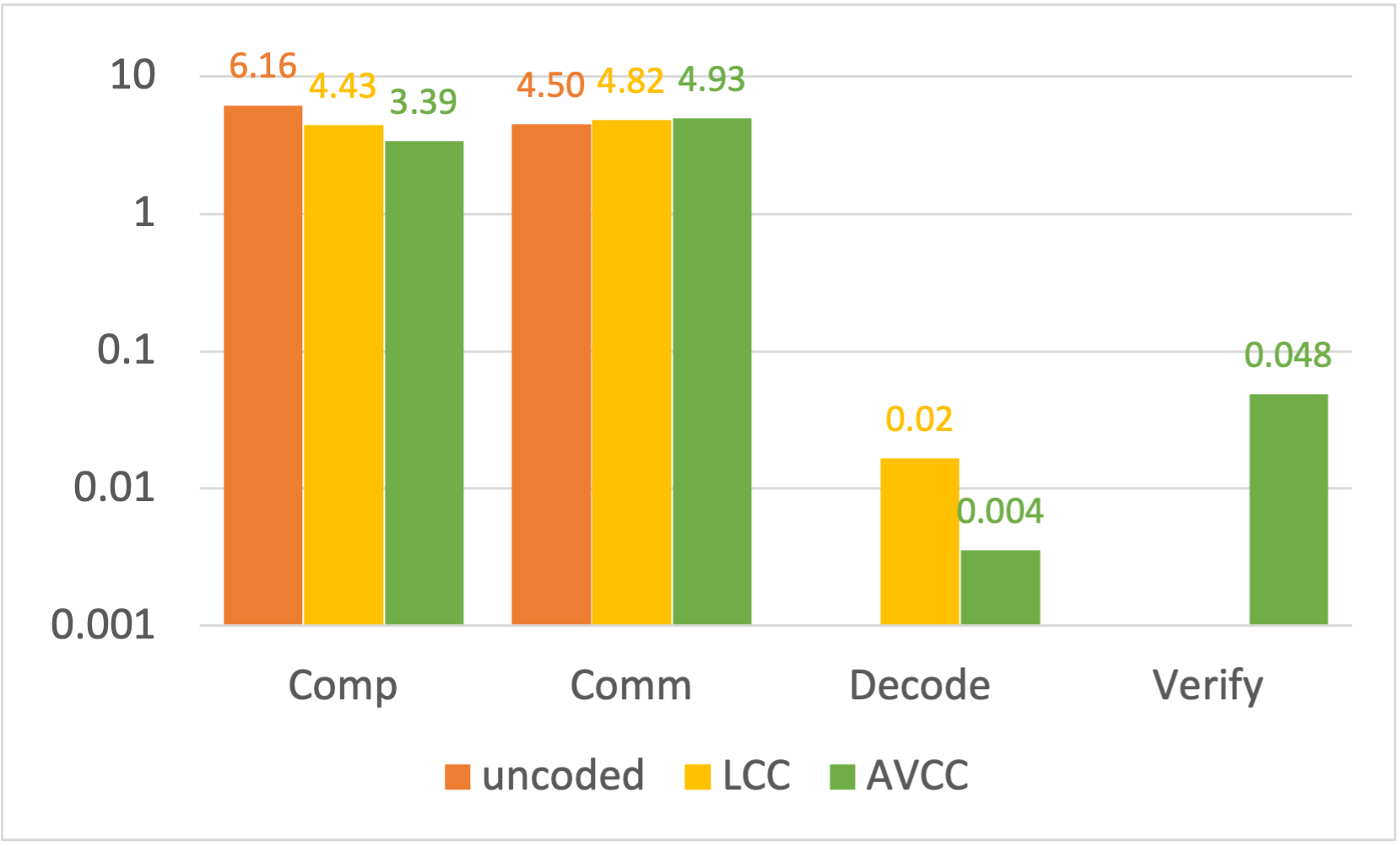}
    \subcaption{$S=2, M=1$ (Test accuracy of both LCC and AVCC $95.1\%$)}
    \end{minipage}\hfill
    \caption{Per iteration runtime analysis of AVCC, LCC and uncoded baseline under different numbers of stragglers and Byzantine nodes. Results under constant attack are similar to that under reverse value attack and thus only results under reverse value attack are shown in (b) and (c).}
    \label{fig:Profiling}
\end{figure*}

With the $(S=1, M=2)$ setup shown in Fig. \ref{fig:Attack}(d) with the constant attack, LCC converges to $90\%$ accuracy whereas the uncoded scheme converges to a lower accuracy of $83\%$. This result is expected for the uncoded scheme, because with $M=2$ there are two Byzantine nodes in the system that drag the accuracy down when compared to a single Byzantine node setup. As for LCC, it is designed to tolerate ($S=1,M=1$), which protects LCC from one malicious node only. Hence, the accuracy of LCC is lowered by the additional malicious node existing in the system, and converges only to an accuracy of $90\%$.
As explained earlier, if the number of Byzantine nodes increases LCC would need either more workers or assign more work per each worker node, both of which are undesirable. 

\begin{table}[h]
\centering
\caption{Speedups of AVCC over LCC and the uncoded scheme under various settings.}
\label{table:2}
\begin{tabular*}{.6\columnwidth}{c|c|c}
\hline
\multicolumn{1}{c|}{Setting}           &             \multicolumn{1}{c|}{LCC}    & \multicolumn{1}{c}{Uncoded} \\ 
\hline
\begin{tabular}[c]{@{}c@{}}Reverse value attack\\ $S=1, M=2$\end{tabular} & $2.66\times$ & $5.13\times$  \\ \hline
\begin{tabular}[c]{@{}c@{}}Reverse value attack\\ $S=2, M=1$\end{tabular} & $1.09\times$ & $3.22\times$  \\ \hline
\begin{tabular}[c]{@{}c@{}}Constant attack\\ $S=1, M=2$\end{tabular} & $4.17\times$   & $5.41\times$    \\ \hline
\begin{tabular}[c]{@{}c@{}}Constant attack\\ $S=2, M=1$\end{tabular} & $1.13\times$   & $7.64\times$    \\ \hline
\end{tabular*}
\end{table}

\textbf{Training Time.}  Applying AVCC leads to significant end-to-end speedups as shown in Table \ref{table:2}. In particular, AVCC leads to $4.2\times$ speedup gain over LCC and more than $5\times$ speedup over the uncoded scheme under the constant attack. Under the reverse value attack, AVCC achieves up to $2.7\times$ speedup over LCC and more than $5.1\times$ speedup over the uncoded approach. 

\textbf{Per Iteration Cost of AVCC}. The per iteration cost of applying AVCC to logistic regression using GISETTE dataset is shown in Fig. \ref{fig:Profiling}; note the logarithmic scale on the Y-axis. We breakdown the iteration cost into four categories as follows.
\begin{enumerate}
    \item \textbf{Compute Time}. This is the worst-case latency for performing the matrix operations at any  worker node. 
    \item \textbf{Communication Time}. This accounts for the time to send and receive data between the workers and the main server.
    \item \textbf{Verification Time}. This is the time to verify the results. Note that the cost of encoding and key generation are one-time costs, which get amortized over multiple iterations of the model training or inference.
    \item \textbf{Decoding Time}. This is the decoding time at the main server after the verification.  
    \end{enumerate}
As shown in Fig.~\ref{fig:Profiling}(a), in a straggler-free and Byzantine-free environment, the decoding and verification time of AVCC incurs extra latency. But when there are stragglers in the cluster, the decoding and verification overhead in AVCC is dwarfed by the straggler latency as shown in Fig.~\ref{fig:Profiling}(b) and Fig.~\ref{fig:Profiling}(c). Note that LCC has no separate verification cost since that process is coupled with the decoding process. The presence of stragglers  causes the uncoded execution time to increase substantially. However, both LCC and AVCC are able to tolerate stragglers and Byzantine nodes. and AVCC achieves superior test accuracy even with higher Byzantine node counts.

\textbf{Dynamic Coding}. Recall that AVCC has the ability to dynamically change the coding strategy if straggler or Byzantine nodes persist in the system by re-encoding the data.  We evaluate the benefits provided by re-encoding the data over an approach that performs AVCC functions but without re-encoding the data. We call that approach Static VCC. Static VCC is a constrained version of AVCC, where the verification mechanism is still available to mitigate Byzantine nodes, but the dynamic coding is removed so that the coding scheme will not change throughout the execution.  The primary disadvantage of dynamic re-coding is that newly encoded data must be re-sent to the workers once such a decision is made by AVCC. 

\begin{figure}[t]
    \centering
    \includegraphics[width=.8\linewidth]{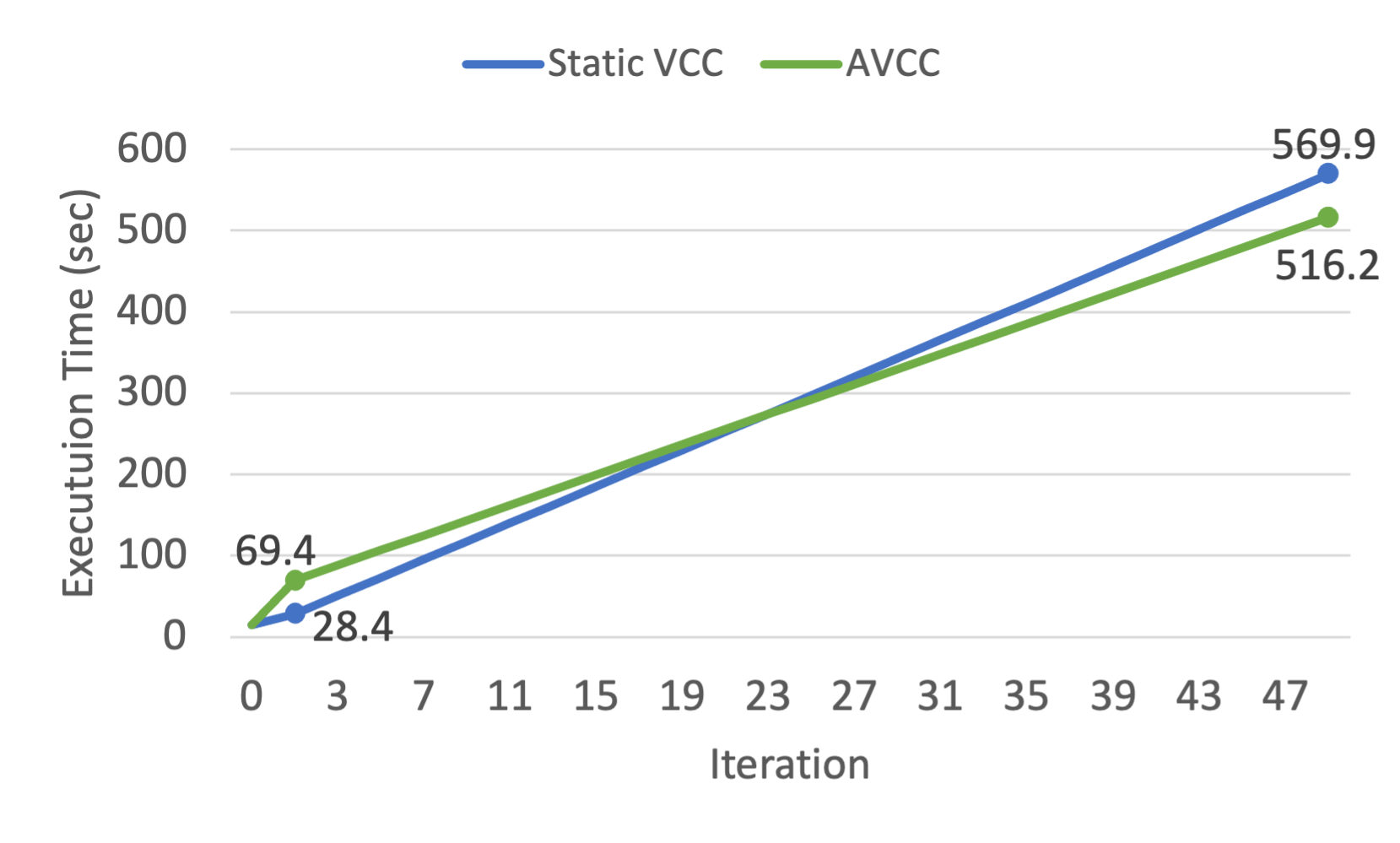}
    \captionof{figure}{AVCC and Static VCC execution time comparison}
    \label{fig:Dynamic2}
\end{figure}
Fig.~\ref{fig:Dynamic2} quantifies the benefit of AVCC over Static VCC in an exemplary scenario. In this scenario we start with the initial coding scheme  of $(N=12, K=9, S=2, M=1)$. That means we can tolerate one Byzantine node and two stragglers. At the start of iteration $1$, the system encounters three stragglers and one Byzantine node. At this juncture, AVCC can eliminate the one Byzantine node from the group of workers, but it also recognizes that the coding scheme is no longer able to handle 3 stragglers. AVCC then re-encodes the data using $(N=11, K=8, S=3, M=0)$ setting. Static VCC, on the other hand, does not re-encode the data. AVCC incurs a one-time cost of about $41$ seconds at the end of iteration $1$, because it sends the encoded matrices with the updated coding scheme $(11, 8)$ to the workers. In spite of this re-encoding cost, at the end of the $50$ iterations, AVCC saves about $54$ seconds in the overall execution time, compared to Static VCC.  This scenario exemplifies the benefits of the adaptive nature of AVCC. It is also worth noting that the one-time cost of re-encoding and data transfer overhead can be mitigated in different ways. For instance, it is possible for the main server to generate a priori multiple versions of the encoded data and send those versions to the workers, where each version is encoded with a different coding scheme. An alternative scheme would allow the main server to reactively encode the data off the critical path. 

\section{Conclusion}
\label{sec:conclusion}
In this work, we presented AVCC, a framework for resilient, robust, and private distributed machine learning via coded computing with dynamic coding and verifiable computing. AVCC is robust to up to $M$ malicious nodes, tolerates up to $S$ stragglers, and provides privacy against up to $T$ colluding nodes, while being several times faster than the state-of-the-art LCC approach. Unlike prior coded computing approaches, AVCC decouples the computational integrity check from the straggler tolerance thereby reducing the cost of Byzantine tolerance.  

AVCC opens the door for several interesting future directions. The encoding, decoding, and data distribution process is conducted by a trusted central server. The question to pose next is whether this central server could also be removed from our trust base. We believe that using a trusted execution environment such as an Intel SGX~\cite{costan2016intel,Tramr2019SlalomFV,niu2021asymml} equipped cloud server, one can move the vulnerable computations such as encoding and decoding to a hardware assisted secure enclave. Second, deep neural networks have non-linear computations that are difficult to decode when such computations are applied to encoded data. One potential option is to approximate such non-linearities using polynomials or rational functions ~\cite{mishra2020delphi,ali2020polynomial,soleymani2021approxifer,jahani2020berrut}. This approximation comes at the cost of accuracy loss. However, it can defend against Byzantine workers attacks. 
\section*{Acknowledgment}

This material is based upon work supported by Defense Advanced Research Projects Agency (DARPA) under Contract No. HR001117C0053 and FA8750-19-2-1005, ARO award W911NF1810400, NSF grants CCF-1703575, CCF-1763673, and MLWINS-2002874, ONR Award No. N00014-16-1-2189, and a gift from Intel/Avast/Borsetta via the PrivateAI institute, a gift from Cisco, and a gift from Qualcomm. The views, opinions, and/or findings expressed are those of the author(s) and should not be interpreted as representing the official views or policies of the Department of Defense or the U.S. Government.
\bibliographystyle{./IEEEtran}
\bibliography{./IPDPS22_main.bib}
\end{document}